%
%

\documentclass[a4paper,twocolumn,11pt,accepted=2020-11-30]{quantumarticle}
\pdfoutput=1
\usepackage{amsmath,amsthm}
\usepackage{amssymb}
\usepackage[numbers,sort&compress]{natbib}

\newcommand{\la}{\lambda}

\theoremstyle{plain}
\newtheorem{prop}{Proposition}
\newtheorem{lemma}[prop]{Lemma}
\newtheorem{corr}[prop]{Corollary}

\newtheorem{thm}[prop]{Theorem}

\theoremstyle{definition}

\newtheorem{rem}[prop]{Remark}

\title{Synthesis of CNOT-Dihedral circuits with optimal number of two qubit gates}
\author{Shelly Garion}
\address{IBM Quantum, IBM Research Haifa, Haifa University Campus, Mount Carmel, 
	Haifa, 3498825, Israel}
\email{shelly@il.ibm.com}

\author{Andrew W. Cross}
\address{IBM Quantum, IBM T.J. Watson Research Center, Yorktown Heights, NY 10598, USA}
\email{awcross@us.ibm.com}


\begin{document}

\begin{abstract}
In this note we present explicit canonical forms for all the elements in the two-qubit 
CNOT-Dihedral group, 
with minimal numbers of controlled-$S$ ($CS$) and controlled-$X$ ($CX$) gates, 
using the generating set of quantum gates $[X, T, CX, CS]$.
We provide an algorithm to successively construct the $n$-qubit CNOT-Dihedral group,
asserting an optimal number of controlled-$X$ ($CX$) gates.
These results are needed to estimate gate errors via non-Clifford randomized benchmarking
and may have further applications to circuit optimization over fault-tolerant gate sets.
\end{abstract}

\maketitle

\section{Introduction}
\emph{Randomized Benchmarking (RB)}~\cite{RB08,RB10,RB11} is a well-known algorithm
that provides an efficient and reliable experimental estimation of an average error-rate 
for a set of quantum gate operations, by running sequences of random gates from the 
\emph{Clifford} group that should return the qubits to the initial state. 
RB techniques are scalable to many qubits since the Clifford group can be efficiently 
simulated (in polynomial time) using a classical computer~\cite{AG,BM2,GK,MR}.
RB can also be used to characterize specific interleaved gate errors~\cite{RBint},
coherence errors~\cite{RBPurity1, RBPurity2} and leakage errors~\cite{RBleak}.
RB methods were generalized to certain single qubit non-Clifford gates,
like the $T$-gate~\cite{RBDihedral}. 
In~\cite{NonCliff} the authors presented a scalable RB procedure to benchmark 
important non-Clifford gates, such as the controlled-$S$ gate and controlled-controlled-$Z$ gate,
which belong to a certain group called the \emph{CNOT-Dihedral} group.

Certain CNOT-Dihedral groups have two key characteristics in common with the Clifford group. 
First, these groups have elements with concise representations that can be efficiently manipulated
~\cite{ACR, NonCliff}. 
Second, these groups are the set of transversal (fault-tolerant) gates for certain quantum 
error-correcting codes~\cite{Bo, BM, BK, GC, JKY, ZCC}.
Since the Clifford gates together with the $T$ gate form a universal set of gates, 
there are many papers aiming to optimize the number of $T$ gates~\cite{CH,GKMR,HC,NRSCM,RS}. 
Additional methods aim to minimize the count of controlled-$X$ ($CX$) gates in universal circuits~\cite{YSYI}, and in particular, in controlled-$X$-phase circuits~\cite{AAM, MSM}.

In addition, as the Clifford gate together with the controlled-$S$ ($CS$) gate also forms a 
universal set of gates, an algorithm has recently been introduced to construct a circuit with an optimal number of $CS$ gates given a two-qubit Clifford+$CS$ operator~\cite{GRT}.
Another example is the controlled-controlled-$Z$ gate, which is equivalent to the Toffoli gate 
(up to single qubit gates), that can be decomposed into 6 $CX$ gates and single qubit gates, 
but requires only 5 two-qubit gates in its decomposition if the $CS$ and $CS^{-1}$
gates are also available~\cite{Toffoli}.

It is therefore important to efficiently present the elements in the CNOT-Dihedral group
using a minimal number of physical basic gates, in particular, two-qubit gates like the 
controlled-$X$ ($CX$) and controlled-$S$ ($CS$) gates.

\medskip

Recall that $X$ is the Pauli gate defined as 
$$X=\begin{pmatrix} 0 & 1 \\1 & 0 \end{pmatrix}$$
Fix an integer $m$ and define
$$T(m) =\begin{pmatrix} 1 & 0 \\0 & e^{2\pi i / m} \end{pmatrix}$$
By abuse of notation we will denote $T=T(m)$, although the $T$ gate is usually defined as 
$T(8) = \begin{pmatrix}1 & 0 \\ 0 & e^{2\pi i / 8}\end{pmatrix}$.

The single-qubit \emph{Dihedral} group is generated by the $X$ and $T=T(m)$ gates
(up to a global phase) and contains $2m$ elements,
\begin{equation}\label{def:dihedral}
\begin{split}
&\langle X, T \rangle / \langle \la I: \la \in \mathbb{C} \rangle = \\
&\{X^l T^k: l \in \{0,1\}, k \in \{0,\dots,m-1\}\}.
\end{split}
\end{equation}

More generally, the \emph{CNOT-Dihedral} group on $n$ qubits $G=G(m)$ 
is generated by the gates $X$, $T=T(m)$ and controlled-$X$ ($CX$), up to a global phase 
(see~\cite{NonCliff} for details),
\begin{equation}\label{def:cnotdihedral}
\begin{split}
G = G(m) =& \langle X_i, T_i, CX_{i,j}: \\
&i,j \in \{0,\dots,n-1\} \rangle / \\
&\langle \la I: \la \in \mathbb{C} \rangle,
\end{split}
\end{equation}
where the controlled-$X$ ($CX$) gate is defined as
$$CX = \begin{pmatrix} 1 & 0 & 0 & 0 \\ 0 & 1 & 0 & 0 \\ 
0 & 0 & 0 & 1 \\ 0 & 0 & 1 & 0 \end{pmatrix}$$

When $m$ is not a power of $2$, the group $G=G(m)$ has double exponential order as a function of the number of qubits $n$. 
In the special case when $m$ is a power of two, the group is only exponentially large and we can represent 
its elements efficiently (see~\cite{NonCliff}).
Elements of $G(m)$ belong to level $\log_2 m$ of the Clifford hierarchy when $m$ is a power of two~\cite{GC, JKY} 
and this is related to the fact that they are the transversal gates of certain $m$-dimensional quantum codes~\cite{Bo}.

Again, by abuse of notation we denote 
$S=T^2=T(m)^2 = \begin{pmatrix}1 & 0 \\ 0 & e^{4\pi i/m}\end{pmatrix}$, 
although the $S$ gate is usually defined as 
$T(8)^2=T(4) = \begin{pmatrix}1 & 0 \\ 0 & i\end{pmatrix}$. 
Observe that $S$ has order $m/2$ if $m$ is even, and order $m$ if $m$ is odd,
namely, $S$ has order $m/d$ where $d=\gcd(m,2)$.

The controlled-$S$ ($CS$) gate belongs to $G$ and can be written as
\begin{equation}\label{def:CS}
\begin{split}
CS_{i,j} &= T_iT_j \cdot CX_{i,j} \cdot I_iT^\dagger_j \cdot CX_{i,j} \\ 
&= \begin{pmatrix} 1 & 0 & 0 & 0 \\ 0 & 1 & 0 & 0 \\ 
0 & 0 & 1 & 0 \\ 0 & 0 & 0 & e^{4\pi i / m} \end{pmatrix},
\end{split}
\end{equation}
where $T_iT_j$ means the tensor product $T_i \otimes T_j$.
In the case where $m=8$, the $CS$ gate is less expensive to physically implement than one $CX$ gate\footnote{Up to single-qubit rotations, the gate is equivalent to controlled-$\sqrt{X}$ gate, so it can be implemented by evolving for half the duration of a controlled-$X$ gate~\cite{CSGate}.}
which makes it an alternative to $CX$ for improving circuit decompositions.

We focus on the case where $n=2$. The following two Theorems provide 
canonical forms for all the elements in the two-qubit CNOT-Dihedral group, 
such that the numbers of $CS$ and $CX$ gates are optimal.
This is analogous to the description in~\cite{RB2} of the elements in the 
two-qubit Clifford group.

\begin{thm}\label{thm:CS}
Consider the $CS$-Dihedral subgroup on two qubits, namely the two-qubit group
generated by the gates $X$, $T=T(m)$ and $CS$ (controlled-$S$), where $S=T^2$, and denote $d=\gcd(m,2)$. 
Then this group has $\frac{4m^3}{d} = \frac{m}{d}(2m)^2$ elements of the following form:
$$U = CS_{0,1}^e \cdot X_0^k X_1^{k'} \cdot T_0^l T_1^{l'}$$
where $k,k' \in \{0,1\}$, $l,l' \in \{0,\dots,m-1\}$, 
$e \in \{0,1,\dots,m/d-1\}=\{0, \pm 1, \pm 2, \dots, \pm \lceil \frac{m-d}{2d} \rceil \}$.
\end{thm}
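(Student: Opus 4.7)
The plan is to exhibit a normal-form rewriting system for words in the generators $X_0, X_1, T_0, T_1, CS$ that reduces any element to the stated canonical form, and then to verify the cardinality by a direct distinctness check. I would begin by recording the elementary rewriting rules: the single-qubit dihedral relation $X_i T_i = T_i^{-1} X_i$; the commutation of $X_i$ with $T_j$ when $i\neq j$; and the fact that $CS$, being diagonal, commutes with $T_0$ and $T_1$. The only non-routine rule is the identity
\[
X_i \cdot CS = S_j \cdot CS^{-1} \cdot X_i, \qquad \{i,j\}=\{0,1\},
\]
which I would verify by a direct $4\times 4$ matrix computation: with $\omega = e^{2\pi i /m}$, conjugating $CS = \mathrm{diag}(1,1,1,\omega^2)$ by the permutation $X_i$ moves the phase $\omega^2$ from $|11\rangle$ to the basis state with qubit $i$ equal to $0$ and qubit $j$ equal to $1$, and the resulting diagonal matrix equals $S_j\cdot CS^{-1}$. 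Iterating then gives $X_i \cdot CS^e = S_j^e \cdot CS^{-e} \cdot X_i$ for every integer $e$.

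Next, letting $\mathcal{N}$ denote the set of canonical forms, I would show that $\mathcal{N}$ contains the identity and each generator, and is closed under left-multiplication by every generator. The cases of $T_0$, $T_1$, $CS$ follow immediately from diagonality. The interesting case is left-multiplication by $X_i$: the iterated key identity pushes $X_i$ past $CS^e$, replacing $CS^e$ by $CS^{-e}$ and introducing a factor $S_j^e$, which is then commuted rightward (anti-commuting with $X_j$ via $X_j T_j = T_j^{-1}X_j$ and commuting past $X_{1-j}$) and absorbed into the exponent of $T_j^{\,l}$ or $T_j^{\,l'}$, with the new $CS$-exponent reduced modulo $m/d = \mathrm{ord}(CS)$ to bring it back into the canonical range. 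Since $\mathcal{N}$ contains the identity and is closed under left-multiplication by every generator, $\mathcal{N}$ equals the full group.

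For the count, the canonical form contains $(m/d)\cdot 4\cdot m^2 = 4m^3/d$ parameter tuples, and distinctness up to global phase follows by inspection: the element $U = CS^e X_0^k X_1^{k'} T_0^l T_1^{l'}$ has exactly four nonzero matrix entries arranged in the permutation pattern determined by $(k,k')$, and the ratios of the corresponding four phases determine $l$ and $l'$ modulo $m$ and $2e$ modulo $m$, hence $e$ modulo $m/d$. The main obstacle I anticipate is the bookkeeping in the closure argument: after each application of the key identity, one must carefully migrate the factor $S_j^e$ through the remaining $X$ and $CS$ terms so that it lands cleanly in the correct $T$-slot, and verify that the cyclic reductions modulo $m$ and modulo $m/d$ all align consistently across the case analysis.
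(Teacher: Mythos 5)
Your proposal is correct and follows essentially the same route as the paper: your key identity $X_i\cdot CS\cdot X_i = S_j\cdot CS^{-1}$ is exactly the paper's relation~(\ref{CSXI}), and your closure-under-left-multiplication argument is the same induction the paper runs (via Case~(2) of Theorem~\ref{lem:main}) on the number of $CS^{\pm1}$ factors. The only cosmetic difference is in the distinctness check, where you read off $(k,k')$, $l$, $l'$ and $2e \bmod m$ from the monomial-matrix entries while the paper reduces the same fact to $CS^{e-e'}\in\langle T,X\rangle$ forcing $e=e'$; both are routine.
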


\begin{thm}\label{thm:CSCX}
Let $G$ be the two-qubit CNOT-Dihedral group generated by the gates $X$, $T=T(m)$,
$CX$ and $CS$, where $S=T^2$, and denote $d=\gcd(m,2)$. 
Then this group has $24 \cdot m^3 / d$ elements, divided into the following four classes.
\begin{enumerate}
\item The first class is the {\bf CS-Dihedral subgroup} described in
Theorem~\ref{thm:CS} and has $\frac{4m^3}{d}$ elements, that can be written 
with no $CX$ gates.

\item The second class, called the {\bf CX-like class}, consists of 
$\frac{8m^3}{d} = 2 \cdot \frac{m}{d} \cdot (2m)^2$ elements, 
and contains all the elements of the following form,
which require exactly one $CX$ gate.
$$U = X_0^k X_1^{k'} \cdot T_0^{l} T_1^{l'} \cdot CX_{i,j} \cdot I_iT_j^{e}$$

\item The third class, called the {\bf Double-CX-like class}, consists of 
$\frac{8m^3}{d} = 2 \cdot \frac{m}{d} \cdot (2m)^2$ elements, 
and contains all the elements of the following form,
which require exactly two $CX$ gates.
$$
U = X_0^k X_1^{k'} \cdot T_0^l T_1^{l'} \cdot CX_{i,j} \cdot CX_{j,i} \cdot  I_iT_j^{e} 
$$	

\item The fourth class, called the {\bf Triple-CX-like class}, consists of 
$\frac{4m^3}{d} = \frac{m}{d} \cdot (2m)^2$ elements, 
and contains all the elements of the following form,
which require exactly three $CX$ gates.
$$U = X_0^k X_1^{k'} \cdot T_0^l T_1^{l'} \cdot CX_{0,1} \cdot CX_{1,0}
\cdot I_0T_1^{e} \cdot CX_{0,1} $$
\end{enumerate}	

where $k,k' \in \{0,1\}$, $l,l' \in \{0,\dots,m-1\}$, $e \in \{0,\dots,m/d-1\}$
and $(i,j) \in \{(0,1), (1,0)\}$.
\end{thm}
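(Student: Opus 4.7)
The plan is to establish (a) every $U \in G$ admits a canonical form in one of the four classes via an explicit reduction, (b) the total count of parameter tuples equals $|G|=24m^3/d$, yielding bijectivity and disjointness, and (c) optimality of the $CX$-count in each class.

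For (a), I would start from an arbitrary word in the generators $\{X_i,T_i,CX_{i,j},CS_{i,j}\}$ and normalize using the relations $X_i\cdot CX_{i,j}=CX_{i,j}\cdot X_iX_j$, $X_j\cdot CX_{i,j}=CX_{i,j}\cdot X_j$, $T_i\cdot CX_{i,j}=CX_{i,j}\cdot T_i$, $XT=T^{-1}X$ (up to global phase), $CX^2=I$, the SWAP braid relation $CX_{0,1}CX_{1,0}CX_{0,1}=CX_{1,0}CX_{0,1}CX_{1,0}$, and the rearrangement $CX_{i,j}\cdot T_j^{-1}\cdot CX_{i,j}=T_i^{-1}T_j^{-1}\cdot CS_{i,j}$ of equation~\ref{def:CS}. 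Pushing all $X$'s and control-side $T$'s to the left past every $CX$, and absorbing any resulting $CS$'s into the left-hand CS-Dihedral factor, the word becomes
\[W\cdot CX_{i_1,j_1}T_{j_1}^{e_1}\cdot CX_{i_2,j_2}T_{j_2}^{e_2}\cdots CX_{i_r,j_r}T_{j_r}^{e_r},\]
with $W$ in the CS-Dihedral subgroup (Theorem~\ref{thm:CS}). Since only two $CX$ generators exist, cancellation $CX^2=I$ forces the sequence to alternate; the crucial step is to show that any alternating sequence of length $\geq 4$ with interleaved $T_j^{e_k}$-factors collapses to length $\leq 3$, after which a case analysis on $r\in\{0,1,2,3\}$ matches the result to one of the four canonical shapes.

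For (b), the parameter tuples give $\frac{4m^3}{d}+\frac{8m^3}{d}+\frac{8m^3}{d}+\frac{4m^3}{d}=\frac{24m^3}{d}$ strings. To see that distinct tuples give distinct elements, I would argue class-by-class: the single-qubit parameters $(k,k',l,l')$ and the exponent $e$ are detectable by examining how $U$ acts on the computational basis, while the index $(i,j)$ and class-membership are detected by which double coset of the CS-Dihedral subgroup the element occupies. Surjectivity from (a) then forces $|G|=24m^3/d$ and partitions $G$ into the four disjoint classes. For (c), note that the reduction in (a) only cancels, braids, or contracts $CX$-pairs into $CS$'s — it never expands a $CS$ into two $CX$'s — so any representation using $r'$ $CX$-gates reduces to a canonical form with at most $r'$ $CX$-gates. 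If a class-$k$ element admitted a representation with $r'<k-1$ $CX$-gates, the reduction would yield a canonical form in a lower class, contradicting the uniqueness established in (b); hence the $CX$-count in each class is optimal.

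The main obstacle is the length-four collapse in (a): showing that four alternating $CX$'s separated by $T$-factors always simplify. The key maneuver is the rearranged identity $CX_{i,j}T_j^{-1}CX_{i,j}=T_i^{-1}T_j^{-1}CS_{i,j}$, which trades two $CX$'s flanking a $T_j^{-1}$ for a single $CS$ absorbable into $W$. Combined with the SWAP braid relation, this should let a middle $T$-factor of a length-four alternating sequence be moved past one $CX$ and used to collapse an adjacent pair. Verifying that this reduction goes through for arbitrary exponents $e_k\in\{0,\dots,m/d-1\}$ — not just the distinguished value $e_k=-1$ for which the $CS$-identity applies directly — and carefully tracking the resulting single-qubit and $CS$ corrections into $W$, is the most delicate part of the argument.
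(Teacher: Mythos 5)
Your architecture requires you to prove both surjectivity (every group element reduces to one of the four forms) and injectivity (the $24m^3/d$ parameter tuples give pairwise distinct operators) from scratch, because you derive the group order from the parametrization rather than importing it --- and both load-bearing steps are left unproven. The ``length-four collapse'' is exactly the hard combinatorial content of the theorem, and you only say it ``should'' go through. Note that your key identity $CX_{i,j}\,T_j^{-1}\,CX_{i,j}=T_i^{-1}T_j^{-1}\,CS_{i,j}$ applies only when the two flanking $CX$'s have the \emph{same} orientation; in an alternating sequence consecutive $CX$'s have opposite orientations, the intervening $T$-factor merely commutes through a control (so no $CS$ is produced), and the reduction instead needs identities such as~(\ref{CXCXT}) and a genuine case analysis --- the braid relation alone does not close it. Likewise your cross-class distinctness is waved at (``detectable by examining how $U$ acts on the computational basis''); as stated it is not an argument, and without it your claim that surjectivity ``forces'' $|G|=24m^3/d$ is circular.

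The paper closes both holes by a different and cheaper route: it imports $|G|=24m^3/d$ from Corollary~1 of~\cite{NonCliff}. It then needs only (i) Lemma~\ref{lem:CS1CX1}, showing a $CS$ adjacent to a $CX$ can be eliminated, so no element needs both gate types (your plan of ``absorbing $CS$'s into the left-hand factor'' should really be this elimination, since classes 2--4 contain no $CS$ in their canonical forms); (ii) the inductive reduction of Theorem~\ref{lem:main} together with Lemmas~\ref{lem:CX1}--\ref{lem:CX3}, which put every element with at most three $CX$ gates into one of the four forms and prove distinctness \emph{within} each class by explicit contradictions of the type ``$CX_{0,1}\cdot T_0^{-l'}T_1^{l}\cdot CX_{1,0}\in\langle T,X\rangle$ is impossible.'' Since the classes then contain $(4+8+8+4)\,m^3/d$ distinct elements and $|G|=24m^3/d$, they must be disjoint and must exhaust $G$; this single counting step simultaneously disposes of words with four or more $CX$ gates (your collapse problem), of cross-class coincidences (your injectivity problem), and of the optimality of the $CX$ count. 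To salvage your self-contained version you would have to actually carry out the collapse and the full injectivity proof; otherwise, cite the group order and let the count do the work.
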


The following Theorem provides an algorithm to successively construct the 
$n$-qubit CNOT-Dihedral group. It is analogous to~\cite{Bra} that discusses 
the generation of the $n$-qubit Clifford group.
Case (1) of this Theorem shows that one can successively construct the CNOT-Dihedral group
asserting an optimal number of $CX$ gates, with a bound on the space to search 
these group elements (see Remark~\ref{rem}).
Moreover, one can also use the ``meet in the middle'' algorithm of~\cite{Meet} 
to synthesize gate sequences for the non-Clifford RB.

\begin{thm}\label{lem:main}
	Let $G=G(m)$ be the CNOT-Dihedral group on $n$ qubits, and denote $d=\gcd(m,2)$.
	
	\begin{enumerate}
		\item Let $F(r)$ be the subset of operators implementable by a circuit with $r$ $CX$ gates
		(and any number of $X$ and $T$ gates).
		Suppose $U$ is in $F(r+1)$, then 
		$$U = I_i T_j^l \cdot CX_{i,j} \cdot U'$$
		for some $U' \in F(r)$,  
		$i,j \in \{0,...,n-1\}$, $i \neq j$, $l \in \{0,\dots,m/d-1\}$.
		In particular, 
		$$|F(r+1)| \leq \frac{m(n^2-n)}{d}|F(r)|$$
		
		\item Let $H(r)$ be the subset of operators implementable by a circuit with $r$ $CS$ or 
		$CS^{\dagger}$ gates
		(and any number of $X$ and $T$ gates).
		Suppose $U$ is in $H(r+1)$, then 
		$$U = CS_{i,j}^e \cdot U'$$ for some 
		$U' \in H(r)$, $i,j \in \{0,...,n-1\}$, $i<j$, $e \in \{-1,1\}$.
		In particular, 
		$$|H(r+1)| \leq (n^2-n)|H(r)|$$
	\end{enumerate}
\end{thm}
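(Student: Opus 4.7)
The plan is to use the same template for both parts: given a circuit realising $U$, locate a specific two-qubit gate in it, move that gate to the leftmost position by commuting all single-qubit debris across it, and absorb the debris into a sub-circuit $U'$ to its right whose two-qubit count is smaller by one.

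For Part~(1), I start from any circuit for $U \in F(r+1)$ and pick the leftmost $CX$ gate in the product expression, call it $CX_{i,j}$. By the leftmost property, everything to its left is a product of single-qubit $X$ and $T$ gates. I push each factor across $CX_{i,j}$ using: (a) gates on qubits outside $\{i,j\}$ commute with $CX_{i,j}$; (b) $T_i$ and $X_j$ commute with $CX_{i,j}$ (the diagonal gate sits on the control, and the target-$X$ commutes trivially); (c) $X_i \cdot CX_{i,j} = CX_{i,j} \cdot X_iX_j$. The only factor that does not pass cleanly is $T_j$, so the residue on the left is $T_j^l$ for some $l \in \{0,\ldots,m-1\}$. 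To reduce $l$ modulo $m/d$, I use that $T_j^{m/d}$ also slides across $CX_{i,j}$: when $d=1$ ($m$ odd) this is trivial since $T^{m} = I$, and when $d=2$ ($m$ even) we have $T_j^{m/2} = Z_j$ together with the identity $Z_j \cdot CX_{i,j} = CX_{i,j} \cdot Z_iZ_j$. This gives $U = I_iT_j^l \cdot CX_{i,j} \cdot U'$ with $l \in \{0,\ldots,m/d-1\}$ and $U' \in F(r)$, and the cardinality bound follows from the $n(n-1)$ ordered pairs $(i,j)$ and the $m/d$ values of $l$.

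For Part~(2), I pick any $CS_{i,j}^{\pm1}$ appearing in a circuit for $U \in H(r+1)$ and commute it to the leftmost position. Since $CS$ is diagonal, it commutes with every $T$ gate, with every other $CS^{\pm1}$ factor, and with $X_k$ for $k \notin \{i,j\}$. The only non-trivial step is passing through an $X_i$ or $X_j$, controlled by the identity $X_i \cdot CS_{i,j}^{e_0} \cdot X_i = S_j^{e_0} \cdot CS_{i,j}^{-e_0}$, which is verified directly from the diagonal entries. Thus each traversal past $X_i$ flips the sign of the $CS$ exponent and leaves behind a single-qubit $S_j^{e_0}$ factor (absorbable into $U'$), and symmetrically for $X_j$. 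Iterating yields $U = CS_{i,j}^{e} \cdot U'$ with $e \in \{\pm1\}$ and $U' \in H(r)$; the bound follows from the $n(n-1)/2$ unordered pairs and the $2$ signs.

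The main obstacle is less conceptual than computational: one must verify each of the commutation identities listed above and then check at every step of pushing that the accumulated single-qubit debris can be absorbed into $U'$ without affecting its two-qubit count. Once these are in place, iterating the push is routine, and the claimed upper bounds on $|F(r+1)|$ and $|H(r+1)|$ follow by counting the parameters $(i,j,l)$ and $(i,j,e)$ enumerated above.
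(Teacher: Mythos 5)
Your proposal is correct and takes essentially the same route as the paper's proof: isolate a two-qubit gate at the left end of the product, push the single-qubit prefix across it using the commutation identities $X_i\cdot CX_{i,j}=CX_{i,j}\cdot X_iX_j$, $Z_j\cdot CX_{i,j}=CX_{i,j}\cdot Z_iZ_j$ (to reduce $l$ below $m/2$ when $m$ is even), and $X_i\cdot CS_{i,j}\cdot X_i=I_iS_j\cdot CS_{i,j}^{\dagger}$, then count the residual parameters $(i,j,l)$ or $(i,j,e)$. The only cosmetic difference is that in Part (2) you commute an arbitrary $CS^{\pm 1}$ to the front rather than starting from the leftmost one, which changes nothing since every gate in an $H(r+1)$ circuit other than the $X$'s is diagonal.
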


\begin{rem}\label{rem}
We note that the bounds in Theorem~\ref{lem:main} are \emph{sharp} and cannot generally be improved, 
since there is an equality in certain cases.
Indeed, assume that $n=2$. If $H(r)$ is the subset of operators implementable by a circuit with $r$ $CS$ gates,
then $H(1)=2 \cdot H(0)$ (see Theorem~\ref{thm:CS}). 
If $F(r)$ is the subset of operators implementable by a circuit with $r$ $CX$ gates,
then $F(1)=\frac{2m}{d} \cdot F(0)$ (see Theorem~\ref{thm:CSCX}). 
\end{rem}

\begin{corr}
In order to generate all the elements in the $n$-qubit CNOT-Dihedral group $G=G(m)$ 
having at most $r$ $CX$ gates, 
the algorithm generates at most 
$$(2m)^{n} \cdot \left(\frac{m}{d}\right)^{r} \cdot (n^2-n)^{r}$$
group elements.
\end{corr}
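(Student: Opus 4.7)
The plan is to iterate the multiplicative bound of Theorem~\ref{lem:main}(1) starting from an explicit enumeration of the base case $F(0)$. First I would compute $|F(0)|$ directly. An element of $F(0)$ is implementable with no $CX$ gate, so it is a word in the single-qubit gates $X_i$ and $T_i$ for $i \in \{0,\dots,n-1\}$. Since $X_i, T_i$ commute with $X_j, T_j$ whenever $i \neq j$, every such operator factors as a tensor product of $n$ single-qubit pieces, each drawn from the single-qubit Dihedral group of order $2m$ enumerated in \eqref{def:dihedral}. This gives $|F(0)| = (2m)^n$.

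Second, I would apply Theorem~\ref{lem:main}(1) inductively. The recursion $|F(r+1)| \leq \frac{m(n^2-n)}{d} |F(r)|$ combined with the base case yields, by a straightforward induction on $r$,
$$|F(r)| \leq (2m)^n \cdot \left(\frac{m}{d}\right)^r \cdot (n^2-n)^r,$$
which is precisely the claimed bound.

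Third, to translate this into the algorithmic generation count, I would observe that the enumeration implicit in Theorem~\ref{lem:main}(1) takes each previously produced operator $U'$ and prepends $I_i T_j^l \cdot CX_{i,j}$ for every ordered pair $(i,j)$ with $i \neq j$ and every $l \in \{0,\dots,m/d-1\}$. Each parent thus spawns at most $m(n^2-n)/d$ children, and the total number of candidates produced after $r$ layers is bounded by the same product $(2m)^n \cdot (m/d)^r \cdot (n^2-n)^r$.

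I do not anticipate a substantive obstacle, since the corollary is essentially a book-keeping consequence of Theorem~\ref{lem:main}(1). The one step that deserves care is the base-case count $|F(0)| = (2m)^n$, which relies on the commutation of single-qubit gates on disjoint qubits together with the cardinality of the single-qubit Dihedral group from \eqref{def:dihedral}. One should also note that the bound counts candidates the algorithm produces, potentially with duplicates, and therefore automatically upper-bounds the number of \emph{distinct} group elements having a circuit of at most $r$ $CX$ gates.
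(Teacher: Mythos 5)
Your proposal is correct and follows exactly the route the paper intends: the corollary is stated without proof as an immediate consequence of Theorem~\ref{lem:main}(1), namely the base count $|F(0)|=(2m)^n$ from the direct product of $n$ single-qubit Dihedral groups of order $2m$, iterated through the recursion $|F(r+1)|\leq \frac{m(n^2-n)}{d}|F(r)|$. Your closing remarks on duplicate candidates and the base-case commutation argument are accurate and add nothing at odds with the paper.
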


\section{Useful identities and the proof of Theorem~\ref{lem:main}}
Consider quantum circuits on a fixed number of qubits $n$ that are products of controlled-X gates $CX$, 
bit-flip gates $X$, and single-qubit phase gates $T=T(m)$ satisfying $T|u\rangle:=e^{i\pi u/m}|u\rangle$. 
When these gates are applied to each qubit or pairs of qubits, they generate a group $G=G(m)$ of unitary operators 
that is an example of a CNOT-dihedral group. An element $U\in G$ acts on the standard basis as
\begin{equation}
U|x\rangle = e^{p(x)}|f(x)\rangle
\end{equation}
where $p(x)=p(x_1,\dots,x_n)$ is a polynomial called the \emph{phase polynomial} and $f(x)$ is an affine reversible function.
Since $x_j\in{\mathbb F}_2$, so $x_j^2=x_j$, the phase polynomial is
\begin{equation}
p(x)=\sum_{\alpha\subseteq\{0,1\}^n}p_\alpha x^\alpha
\end{equation}
where $x^\alpha=\prod_{j\in\alpha} x_j$. Furthermore, the coefficients can be chosen such that $p_\emptyset=0$ and $p_\alpha\in(-2)^{|\alpha|-1}{\mathbb Z}_{2m}$ otherwise (see~\cite{NonCliff}).

Recall the following useful identities in the Dihedral group defined in~(\ref{def:dihedral}) generated by the $T=T(m)$ and $X$ gates (up to a global phase),
\begin{equation}
\begin{split}\label{Dihedral}
T^{\dagger} =& T^{m-1} \\
XTX &= T^{\dagger} \\
TXT &= X \\
TXT^{\dagger} &= SX 
\end{split}
\end{equation}

We state here some useful identities in the CNOT-Dihedral group defined in~(\ref{def:cnotdihedral})
regarding the controlled-$S$ ($CS$) gate.
According to the definition of the $CS$ gate in~(\ref{def:CS}),
\begin{equation}\label{CS}
\begin{split}
CS_{i,j} &= T_iT_j \cdot CX_{i,j} \cdot I_iT^\dagger_j \cdot CX_{i,j} \\
&= CX_{i,j} \cdot I_{i}T^\dagger_{j} \cdot CX_{i,j} \cdot T_iT_j
\end{split}
\end{equation}

We deduce that
\begin{equation}\label{CSCX}
\begin{split}
CS_{i,j} \cdot CX_{i,j} &= T_iT_j \cdot CX_{i,j} \cdot I_iT^{\dagger}_j, \\
CX_{i,j} \cdot CS_{i,j} &= I_iT^{\dagger}_j \cdot CX_{i,j} \cdot T_iT_j
\end{split}
\end{equation}

Similarly,
\begin{equation}\label{CSdg}
\begin{split}
CS^\dagger_{i,j} &= 
T^\dagger_iT^\dagger_j \cdot CX_{i,j} \cdot I_iT_j \cdot  CX_{i,j} \\
&= CX_{i,j} \cdot I_i T_j \cdot  CX_{i,j} \cdot  T^\dagger_i T^\dagger_j
\end{split}
\end{equation}

We note that according to their definition, the $CS$ and $CS^{\dagger}$ gates (as well as their powers) 
are symmetrical, namely,
\begin{equation}\label{CSsym}
\begin{split}
CS_{j,i} = CS_{i,j} \\
CS_{j,i}^{\dagger} = CS_{i,j}^{\dagger}
\end{split}
\end{equation}

$T$ (and all its powers) commutes with the control and target of the $CS$ gate, 
namely,
\begin{equation}\label{CST}
\begin{split}
I_iT_j \cdot CS_{i,j} = CS_{i,j} \cdot I_iT_j, \\
T_iI_j \cdot CS_{i,j} = CS_{i,j} \cdot T_iI_j, \\
T_iT_j \cdot CS_{i,j} = CS_{i,j} \cdot T_iT_j
\end{split}
\end{equation}

In addition, we have the following relations between the $CS$ and $X$ gates,
\begin{equation}\label{CSXI}
\begin{split}
&X_i I_j \cdot CS_{i,j} \cdot X_i I_j = CS_{i,j}^{\dagger} \cdot I_iS_j 
= I_iS_j \cdot CS_{i,j}^{\dagger} \\
&I_i X_j \cdot CS_{i,j} \cdot I_i X_j = CS_{i,j}^{\dagger} \cdot S_iI_j = 
S_iI_j \cdot CS_{i,j}^{\dagger}\\
&X_i X_j \cdot CS_{i,j} \cdot X_i X_j = CS_{i,j} \cdot S_i^{\dagger}S_j^{\dagger} = S_i^{\dagger}S_j^{\dagger} \cdot CS_{i,j}
\end{split}
\end{equation}

We shall moreover use the following identities of the $CX$ gate.
$T$  (and all its powers) commutes with the control of $CX$, 
and $X$ (and all its powers) commutes with the target of $CX$, namely,
\begin{equation}\label{CXTX}
\begin{split}
I_iX_j \cdot CX_{i,j} &= CX_{i,j} \cdot I_iX_j, \\
T_iI_j \cdot CX_{i,j} &= CX_{i,j} \cdot T_iI_j
\end{split}
\end{equation}

In addition, we have the following relation between the control of $CX$ and 
the $X$ gate,
\begin{equation}\label{CXXX}
CX_{i,j} \cdot X_iI_j \cdot CX_{i,j} = X_iX_j
\end{equation}

Recall that the $Z$ gate is defined as $Z=\begin{pmatrix} 1&0 \\ 0&-1\end{pmatrix}$.
Then we have the following useful relation between the $CX$ gate and the $Z$ gate,
\begin{equation}\label{CXZ}
CX_{i,j} \cdot I_iZ_j \cdot CX_{i,j} = Z_iZ_j
\end{equation}

Finally, the product $CX_{i,j} \cdot CX_{j,i}$, which is in the iSWAP-like class of Clifford gates (see~\cite{RB2})), satisfies the following relation,
\begin{equation}\label{CXCXT} 
I_iT_j \cdot CX_{i,j} \cdot CX_{j,i} = 
CX_{i,j} \cdot CX_{j,i} \cdot T_i I_j 
\end{equation}

\medskip

Based on the above identities we can now prove Theorem~\ref{lem:main}.
\begin{proof}[Proof of Theorem~\ref{lem:main}]
	
1) There exists a product of single qubit gates $V=V_1 \dots V_n$, 
$V_k \in \langle X,T \rangle$ such that 
$U = V \cdot CX_{i,j} \cdot U'$ for some pair of qubits $i,j$.
Absorb $V_k$ for $k \notin \{i,j\}$ into $U'$, namely, 
$$U = X_i^k X_j^{k'} \cdot T_i^l T_j^{l'} \cdot  CX_{i,j} \cdot U'$$
for some $k,k',l,l'$ and $U' \in F(r)$.
Since $T_i^l$ commutes with the control of $CX_{i,j}$ by~(\ref{CXTX}), 
we can absorb $T_i^l$ in $U'$. Since $X_j^{k'}$ 
commutes with the target of $CX_{i,j}$ by~(\ref{CXTX}), 
we can also absorb $X_j^{k'}$ in $U'$.
Hence, $$U = X_i^k T_j^l \cdot CX_{i,j} \cdot U'$$ 
for some $k,l$ and $U' \in F(r)$.

If $k=1$ then according to~(\ref{CXXX}), $X_iI_j \cdot CX_{i,j} = CX_{i,j} \cdot X_iX_j$,
so we can replace $U$ by 
$$I_i T_j^l \cdot CX_{i,j} \cdot X_i X_j \cdot U' = I_i T_j^l \cdot CX_{i,j} \cdot U''$$ 
where $U'' \in F(r)$. We can therefore assume that $k=0$. 

If $m$ is even and $l \geq m/2$ then $T^{m/2} = Z$, so we can rewrite $U$ as 
$$U = I_i T_j^l \cdot I_i Z_j \cdot CX_{i,j} \cdot U'$$ 
for some $l<m/2$. According to~(\ref{CXZ}), 
$I_iZ_j \cdot CX_{i,j} = CX_{i,j} \cdot Z_iZ_j$, so we can replace $U$ by
$$I_i T_j^l \cdot CX_{i,j} \cdot Z_i Z_j \cdot U' = I_i T_j^l \cdot CX_{i,j} \cdot U''$$
where $U'' \in F(r)$. We can therefore assume that $l<m/2$ as needed.

\medskip

2) Similarly to (1) we can assume that 
$$U = X_i^k X_j^{k'} \cdot T_i^l T_j^{l'} \cdot  CS_{i,j}^e \cdot U'$$
for some $k,k',l,l',e=\pm 1$ and $U' \in H(r)$.
Since $T$ commutes with both control and target of $CS$ by~(\ref{CST}), 
we can absorb $T_i^l T_j^{l'}$ in $U'$ and so 
$$U = X_i^k X_j^{k'} \cdot CS_{i,j}^e \cdot U'$$
Now, by~(\ref{CSsym}) we may assume that $i<j$, 
and by~(\ref{CSXI}) we can absorb $X_i^k X_j^{k'}$ in $U'$
and assume that $U = CS_{i,j}^e \cdot U'$ for some $i<j$ and $e=\pm 1$ as needed.
\end{proof}

\section{The canonical forms and proofs of Theorems~\ref{thm:CS} and~\ref{thm:CSCX}}

From now on we will now assume that $G$ is the CNOT-Dihedral 
group on two qubits  $\{0,1\}$,
and describe canonical forms of the elements in $G$.
This is analogous to the description in~\cite{RB2} of the elements in the 
Clifford group on two qubits.

\begin{proof}[Proof of Theorem~\ref{thm:CS}]
	The proof follows by induction on the number $r$ of $CS$ and $CS^{\dagger}$ gates.
	Since $CS$ is of order $m/d$ then necessarily $r < \lceil \frac{m-d}{2d} \rceil$.
	
	Let $r=0$, then any $U \in H(0)$ can be written as
	$$U = X_0^k X_1^{k'} \cdot T_0^l T_1^{l'}$$
	where $k,k' \in \{0,1\}$, $l,l' \in \{0,\dots,m-1\}$,
	since such an element belongs to the direct product of the two single-qubit Dihedral groups.

	Let $r=1$, then according to Case (2) of Theorem~\ref{lem:main},
	any $U \in H(1)$ can be written as
	$$U = CS_{0,1}^e \cdot X_0^k X_1^{k'} \cdot T_0^l T_1^{l'}$$
	where $e \in \{1,-1\}$, $k,k' \in \{0,1\}$, $l,l' \in \{0,\dots,m-1\}$.
	
	Now assume that the Theorem holds for $H(r)$.
	According to Case (2) of Theorem~\ref{lem:main} and the induction assumption,
	any element $U \in H(r+1)$ can be written as 
	$$U = CS_{0,1}^{e} \cdot CS_{0,1}^{e'} \cdot U' = CS_{0,1}^{e+e'} \cdot U' $$
	where $U'\in \langle T,X \rangle$, $e=\pm 1$ and $e'=\pm r$, as needed.
	
	Note that all the elements obtained in this process are distinct, since an equality
	$CS_{0,1}^{e} \cdot U = CS_{0,1}^{e'} \cdot U'$ for some $e,e' \in \{0,\dots,m/d-1\}$ 
	and $U,U' \in \langle T,X \rangle$,
	implies that $CS_{0,1}^{e-e'} \in \langle T,X \rangle$, so necessarily $e=e'$ and $U=U'$.
\end{proof}

\begin{lemma}\label{lem:CS1CX1}
	Let $G$ be the CNOT-Dihedral group on two qubits.
	Then any element in $G$ which has exactly one $CS$ gate and one $CX$ gate
	can be rewritten as an element with no $CS$ gates and exactly
	one $CX$ gate.
\end{lemma}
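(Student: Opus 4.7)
The plan is to commute every single-qubit factor until the $CS^{\pm 1}$ lands immediately adjacent to the $CX$ \emph{on the same qubit pair}, then apply identity~(\ref{CSCX}) (or its $CS^\dagger$ analogue) to collapse the pair into a single $CX$ flanked by $T$-powers.

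I would start from a factorization with the given gate counts,
$$U = V_1 \cdot CS^{\epsilon}_{a,b} \cdot V_2 \cdot CX_{k,l} \cdot V_3,$$
where $\epsilon = \pm 1$ and $V_1,V_2,V_3$ are products of single-qubit $X$ and $T$ gates (the reversed ordering $V_1 \cdot CX \cdot V_2 \cdot CS^{\epsilon} \cdot V_3$ is handled symmetrically). First use~(\ref{CSsym}) to replace $CS^{\epsilon}_{a,b}$ by $CS^{\epsilon}_{k,l}$, aligning the qubit pairs. Then push the factors of $V_2$ through the $CS$: every $T$-power commutes with $CS$ on both control and target by~(\ref{CST}), and every $X$ commutes with $CS$ by~(\ref{CSXI}) at the cost of inserting a single-qubit $S^{\pm 1} = T^{\pm 2}$ and possibly flipping $\epsilon \mapsto -\epsilon$. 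Absorbing everything into updated single-qubit blocks $W_1,W_2$, this reduces $U$ to the shape
$$U = W_1 \cdot CS^{\epsilon'}_{k,l} \cdot CX_{k,l} \cdot W_2.$$

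Next I would apply~(\ref{CSCX}) when $\epsilon' = +1$ to get $CS_{k,l} \cdot CX_{k,l} = T_k T_l \cdot CX_{k,l} \cdot I_k T_l^\dagger$, and the analogous identity $CS^\dagger_{k,l} \cdot CX_{k,l} = T_k^\dagger T_l^\dagger \cdot CX_{k,l} \cdot I_k T_l$, derived from~(\ref{CSdg}) using that $CX$ is an involution, when $\epsilon' = -1$. In either case the $CS$ vanishes, leaving exactly one $CX$ surrounded by single-qubit gates, which is the required form.

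The main obstacle is the bookkeeping around~(\ref{CSXI}): pushing an $X$ through the $CS$ can flip $CS \leftrightarrow CS^\dagger$ and emit a stray single-qubit $S$ factor, so one must check that the number of two-qubit $CS$-type gates stays fixed at one throughout the reduction. Because $S = T^2$ is single-qubit and the sign flip only selects which of the two collapse identities to invoke at the end, the count is preserved at a single $CS^{\pm 1}$ and the argument closes.
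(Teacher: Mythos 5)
Your proposal is correct and follows essentially the same route as the paper: bring the $CS^{\pm 1}$ adjacent to the $CX$ on the same qubit pair using the commutation identities, then collapse the pair via~(\ref{CSCX}) and its dagger analogue. The only cosmetic difference is that the paper first invokes Theorem~\ref{lem:main} so that only $T$-powers (handled by~(\ref{CST})) sit between the two gates, whereas you also push $X$ factors through via~(\ref{CSXI}) with the attendant sign-flip bookkeeping; both arguments close the same way.
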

\begin{proof} 
	According to Theorem~\ref{lem:main} we may assume w.l.o.g.\ that such an element $U$ 
	can be written as a product
	$$
	U = (U' \cdot CX_{0,1} \cdot I_0 T_1^l) \cdot
	(CS_{0,1}^{e} \cdot U'')
	$$
	where $U',U'' \in \langle T,X \rangle$, $l \in \{0,\dots,m/d-1\}$,
	$e \in \{1,-1\}$.
	
	Since $T$ commutes with the control and target of $CS$ by~(\ref{CST}), 
	we may absorb $T_1$ into $U''$, 
	and so $U$ can be rewritten as 
	\begin{equation*}
	\begin{split}
	U &= U' \cdot CX_{0,1} \cdot CS_{0,1}^e \cdot U''  \\ &=
	U' \cdot I_0 T_1^{-e} \cdot CX_{0,1} \cdot T_0^e T_1^e \cdot U''
	\end{split}
	\end{equation*}
	for some $U',U''$ by~(\ref{CSCX}).
	Therefore, $U = U' \cdot  CX_{0,1} \cdot U''$ for some $U',U''$, as needed.
\end{proof}

\begin{lemma}\label{lem:CX1}
	Let $G$ be the CNOT-Dihedral group on two qubits.
	Then any element in $G$ which has exactly one $CX$ gate and no $CS$ gates
	can be written either as:
	$$U =  X_0^k X_1^{k'} \cdot T_0^{l} T_1^{l'} \cdot CX_{0,1} \cdot I_0T_1^{l''}$$
	or: 
	$$U =  X_0^k X_1^{k'} \cdot T_0^{l} T_1^{l'} \cdot CX_{1,0} \cdot T_0^{l''}I_1$$
	where $k,k' \in \{0,1\}$, $l,l' \in \{0,\dots,m-1\}$ and $l'' \in \{0,\dots,m/d-1\}$.
	In particular, $G$ has $\frac{8m^3}{d} = 2 \cdot \frac{m}{d} \cdot (2m)^2$ such elements.
\end{lemma}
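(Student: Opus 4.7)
The plan is to start from the normal form of Theorem~\ref{lem:main}(1) and move all surviving single-qubit gates through the lone $CX$ by means of the commutation identities~(\ref{Dihedral})--(\ref{CXZ}). In detail, since $U \in F(1)$ while $U$ uses no $CS$, Theorem~\ref{lem:main}(1) lets me write
\[
U \;=\; I_i T_j^{l_0}\cdot CX_{i,j}\cdot U',
\]
with $U'\in F(0)$ lying in the two-qubit Dihedral subgroup, so $U' = X_i^a X_j^b\cdot T_i^c T_j^d$ for some $a,b\in\{0,1\}$ and $c,d\in\{0,\dots,m-1\}$, and with $(i,j)\in\{(0,1),(1,0)\}$, $l_0 \in \{0,\dots,m/d-1\}$.

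Next I would push $T_i^c$ and $X_j^b$ through $CX_{i,j}$ using the two commutations in~(\ref{CXTX}), conjugate $X_i^a$ through via~(\ref{CXXX}) (which produces an extra $X_j^a$ on the left), and sweep the accumulated $X$'s past the $T$'s on the left using $XT = T^{\dagger}X$ from~(\ref{Dihedral}). Only $T_j^d$ then remains on the right of $CX_{i,j}$, yielding
\[
U \;=\; X_i^{k} X_j^{k'} \cdot T_i^{l} T_j^{l'}\cdot CX_{i,j} \cdot I_i T_j^{d}
\]
for explicit exponents $k,k',l,l'$. If $m$ is even and $d \geq m/2$, then $T^{m/2}=Z$ and I would split off $Z_j$ from $T_j^{d}$; identity~(\ref{CXZ}) turns it into $Z_iZ_j$ on the left of $CX_{i,j}$, which is then absorbed into $T_i^l T_j^{l'}$ (any $XZ = -ZX$ sign is soaked up as a global phase). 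The remaining exponent on $T_j$ then lies in $\{0,\dots,m/d-1\}$, and renaming $(i,j)$ to $(0,1)$ or $(1,0)$ gives the two stated normal forms.

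For uniqueness and the count, I would use that every $U \in G$ acts on computational basis states as $U|x\rangle = e^{i\theta(x)}|f(x)\rangle$ with an affine $f:\BF_2^2\to\BF_2^2$. A direct computation from each canonical form shows that the $(0,1)$-form gives $f(x_0,x_1) = (x_0\oplus k,\,x_0\oplus x_1 \oplus k')$ while the $(1,0)$-form gives $f(x_0,x_1) = (x_0\oplus x_1\oplus k,\,x_1\oplus k')$; these two families of affine maps are disjoint (their linear parts differ), so no element is represented in both forms. Within a single case the affine part of $f$ fixes $(k,k')$, and inspecting the coefficients of $x_0$, $x_1$, and $x_0 x_1$ in the phase polynomial recovers $(l,l',l'')$ uniquely inside the declared ranges: the $x_0 x_1$ term pins down $l'$ via $x_0\oplus x_1 = x_0+x_1-2x_0x_1$, after which the linear coefficients determine $l$ and $l''$. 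Multiplying parameter counts gives $2\cdot 2\cdot m\cdot m\cdot (m/d)= 4m^3/d$ elements per case and $8m^3/d$ in total. The hardest part will be step two---keeping the $XT$-sign flips straight and correctly implementing the $Z$-trick reduction of the $T_j$-exponent in the even-$m$ case---after which uniqueness falls out cleanly from the phase-polynomial computation.
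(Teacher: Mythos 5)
Your proposal is correct and arrives at the same canonical forms, but its two halves relate to the paper differently. For existence you do essentially what the paper does: invoke Case (1) of Theorem~\ref{lem:main} and push the surviving single-qubit gates through the lone $CX$ using~(\ref{CXTX}), (\ref{CXXX}), (\ref{CXZ}) and the dihedral relations~(\ref{Dihedral}); the paper compresses all of this into the single sentence ``the proof follows from Case (1) of Theorem~\ref{lem:main}'', leaving to the reader the mirror-image bookkeeping (the theorem puts the reduced $T_j$-exponent on the \emph{left} of $CX$ while the lemma wants it on the \emph{right}), and your write-up supplies exactly those details, correctly. For distinctness you genuinely diverge: the paper argues group-theoretically, supposing two canonical forms coincide and deducing that an element such as $CX_{0,1}\cdot I_0T_1^{l'-l}\cdot CX_{0,1}$ or $CX_{0,1}\cdot T_0^{-l'}T_1^{l}\cdot CX_{1,0}$ would have to lie in $\langle T,X\rangle$, whereas you read off the affine map and the phase polynomial of each canonical form and invert; your route is more self-contained and makes the disjointness of the $(0,1)$- and $(1,0)$-families transparent, since their linear parts differ. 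One small imprecision to repair: because $CS$ has order $m/d$, the $x_0x_1$ coefficient pins down $l'$ only modulo $m/d$, so for even $m$ there is an a priori ambiguity $l'\mapsto l'+m/2$; it is excluded because the compensating shift of $l''$ by $m/2$ leaves the allowed range $\{0,\dots,m/2-1\}$ --- precisely the boundary case the paper disposes of with ``$m$ is even and $l-l'=m/2$, yielding a contradiction since $l,l'<m/2$''. With that sentence added, your count $2\cdot 2\cdot 2\cdot m\cdot m\cdot(m/d)=8m^3/d$ goes through.
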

\begin{proof} 
The proof follows from Case (1) of Theorem~\ref{lem:main}.

Note that all the elements obtained in this process are indeed distinct.

First, an equality $U \cdot CX_{0,1} \cdot I_0 T_1^l = U' \cdot CX_{0,1} \cdot I_0 T_1^{l'}$
for some $U, U' \in \langle T,X \rangle$ and $l,l' \in \{0,\dots,m/d-1\}$, 
implies that $CX_{0,1} \cdot I_0 T_1^{l'-l} \cdot CX_{0,1} \in \langle T,X \rangle$,
hence either $l=l'$ and $U=U'$; 
or $m$ is even and $l-l'=m/2$, yielding a contradiction since $l,l'<m/2$.

Second, an equality $U \cdot CX_{0,1} \cdot I_0 T_1^l = U' \cdot CX_{1,0} \cdot T_0^{l'} I_1$
for some $U, U' \in \langle T,X \rangle$ and $l,l' \in \{0,\dots,m/d-1\}$, 
implies that $CX_{0,1} \cdot T_0^{-l'} T_1^{l} \cdot CX_{1,0} \in \langle T,X \rangle$,
yielding a contradiction.
\end{proof}

\begin{lemma}\label{lem:CX2}
	Let $G$ be the CNOT-Dihedral group on two qubits.
	Then any element in $G$ which has exactly two $CX$ gates and no $CS$ gates
	can be written either as:
	$$
	U = X_0^k X_1^{k'} \cdot T_0^l T_1^{l'} \cdot CX_{0,1} \cdot CX_{1,0} \cdot I_0T_1^{l''}
	$$
	or:
	$$
	U = X_0^k X_1^{k'} \cdot T_0^l T_1^{l'} \cdot CX_{1,0} \cdot CX_{0,1} \cdot T_0^{l''}I_1 
	$$
	where $k,k' \in \{0,1\}$, $l,l' \in \{0,\dots,m-1\}$ and $l'' \in \{0,\dots,m/d-1\}$.
	In particular, $G$ has $\frac{8m^3}{d} = 2 \cdot \frac{m}{d} \cdot (2m)^2$ such elements.
\end{lemma}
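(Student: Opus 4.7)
The natural approach is to iterate the structure used for Lemma~\ref{lem:CX1}: apply Case (1) of Theorem~\ref{lem:main} to peel off one $CX$ gate, producing $U = I_i T_j^{l_1} \cdot CX_{i,j} \cdot U'$ with $U' \in F(1)$, and then invoke Lemma~\ref{lem:CX1} on $U'$. Since $U$ requires exactly two $CX$ gates we have $U' \notin F(0)$, so $U'$ takes one of the two canonical forms in Lemma~\ref{lem:CX1}. This produces four sub-cases, distinguished by whether the two resulting $CX$ gates point in the same direction or in opposite directions.

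For the same-direction sub-case, the middle of $U$ contains a block $CX_{i,j} \cdot X_0^{k} X_1^{k'} T_0^{l} T_1^{l'} \cdot CX_{i,j}$. Using (\ref{CXTX}) to push $X_j^{k'}$ and $T_i^{l}$ through the outer $CX_{i,j}$ factors, and (\ref{CXXX}) to rewrite $CX_{i,j} X_i^k CX_{i,j} = X_i^k X_j^k$, this block reduces to single-qubit factors times $CX_{i,j} \cdot I_i T_j^{l'} \cdot CX_{i,j}$. A direct rearrangement of (\ref{CS}) gives the identity $CX_{i,j} \cdot I_i T_j^{l'} \cdot CX_{i,j} = T_i^{l'} T_j^{l'} \cdot CS_{i,j}^{-l'}$, so $U$ collapses to an expression with zero $CX$ gates. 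This would place $U$ in the CS-Dihedral class (class 1 of Theorem~\ref{thm:CSCX}), contradicting the hypothesis that $U$ requires exactly two $CX$ gates; hence the same-direction sub-case contributes no class-3 elements.

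For the opposite-direction sub-case, WLOG $(i,j)=(0,1)$ and $U'$ uses $CX_{1,0}$. Using (\ref{CXTX}) I push $X_1^{k'}, T_0^{l}$ across the left $CX_{0,1}$ into the outer prefix and $X_0^{k}, T_1^{l'}$ across $CX_{1,0}$ to the far right, obtaining a shape $L \cdot CX_{0,1} CX_{1,0} \cdot R$ with $L,R$ single-qubit. Next I apply the iSWAP-like identity (\ref{CXCXT}) to transport each residual $T_0$-factor from the right of $CX_{0,1} CX_{1,0}$ to a $T_1$-factor on its left, and a short computation mirroring (\ref{CXXX}) transports $X_0$-factors as $X_0 X_1$ to the left. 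Using $X T^a = T^{-a} X$ from (\ref{Dihedral}) to re-sort the prefix into canonical $X_0^{k} X_1^{k'} T_0^{l} T_1^{l'}$ order leaves only $I_0 T_1^{e}$ on the right, giving the first canonical form. To force $e \in \{0,\dots,m/d-1\}$ when $m$ is even, I invoke $T_1^{m/2} = Z_1$ together with (\ref{CXZ}) to pull a $Z_0 Z_1 = T_0^{m/2} T_1^{m/2}$ back into the prefix. The symmetric sub-case $(i,j)=(1,0)$ yields the second canonical form.

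Finally I count and verify distinctness. The $2\cdot2\cdot m^2\cdot(m/d)$ parameter choices per direction times two directions give $8m^3/d$ candidates. Assuming an equality of two canonical forms of the same direction reduces, after cancellation, to whether $(CX_{0,1}CX_{1,0}) \cdot T_1^{\Delta e} \cdot (CX_{0,1}CX_{1,0})^{-1}$ lies in $\langle X,T\rangle$; the identity used above shows this conjugate equals $T_0^{\Delta e} T_1^{\Delta e} \cdot CS_{0,1}^{-\Delta e}$, which lies in $\langle X,T\rangle$ only when $\Delta e \equiv 0 \pmod{m/d}$, forcing all parameters to coincide. A cross-direction equality reduces to a similar impossibility, in which the residual contains both a nontrivial power of $CS$ and a nontrivial $CX_{0,1}CX_{1,0}$ factor. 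The main obstacle will be the bookkeeping in the opposite-direction reduction --- the signs introduced by $XT^a = T^{-a}X$ through multiple commutations and the exponent reduction when $m$ is even --- but each step is a routine linear calculation over $\mathbb{Z}_m$.
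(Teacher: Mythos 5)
Your proposal is correct and follows essentially the same route as the paper: peel off the outer $CX$ via Case (1) of Theorem~\ref{lem:main}, observe that the same-direction case collapses to the CS-Dihedral class through $CX_{i,j}\cdot I_iT_j\cdot CX_{i,j}=CS_{i,j}^{\dagger}\cdot T_iT_j$, handle the opposite-direction case with the iSWAP-like commutation~(\ref{CXCXT}), and establish distinctness by conjugating the residual diagonal by $CX_{0,1}CX_{1,0}$. Your distinctness and exponent-reduction arguments are in fact spelled out more explicitly than the paper's, but the underlying ideas coincide.
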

\begin{proof} 
According to Case (1) of Theorem~\ref{lem:main} and Lemma~\ref{lem:CX1} 
we may assume w.l.o.g. that such an element $U$ 
can be written as
$$ 
U = I_i T_j^l \cdot CX_{i,j} \cdot I_0 T_1^{l'} \cdot CX_{0,1} \cdot U'
$$	
where $U' \in \langle T,X \rangle$, $i,j \in \{0,1\}$, 
$l,l' \in \{0,...,m/d-1\}$. 
Hence, there are two options, either $(i,j)=(0,1)$ or $(1,0)$.

1) First, assume that $(i,j)=(0,1)$, then
$$ 
U = I_0 T_1^l \cdot CX_{0,1} \cdot I_0 T_1^{l'} \cdot CX_{0,1} \cdot U'
$$	

If $l'=0$ then $U \in \langle X,T \rangle$ and we are done.

Otherwise, according to~(\ref{CSdg}), 
$CX_{0,1} \cdot I_0T_1 \cdot CX_{0,1} = CS_{0,1}^{\dagger} \cdot T_0T_1$,
implying that 
\begin{equation*}
\begin{split}
CX_{0,1} \cdot I_0T_1^{l'} \cdot CX_{0,1} &= 
(CX_{0,1} \cdot I_0T_1 \cdot CX_{0,1})^{l'} \\
&= (CS_{0,1}^{\dagger} \cdot T_0T_1)^{l'} \\
&= CS_{0,1}^{-l'} \cdot T_0^{l'}T_1^{l'}
\end{split}
\end{equation*}

by~(\ref{CST}).
Thus we can write $U$ as an element in the subgroup generated by $CS$,
$X$ and $T$.

Then we are done by Theorem~\ref{thm:CS}.

\medskip

2) Now, assume that $(i,j)=(1,0)$, then we can write $U$ as
$$ 
U = T_0^lI_1 \cdot CX_{1,0} \cdot I_0 T_1^{l'} \cdot CX_{0,1} \cdot U'
$$	

By~(\ref{CXTX}), $T_1$ commutes with $CX_{1,0}$, so we may write $U$ as
$$ 
U = T_0^lT_1^{l'} \cdot CX_{1,0} \cdot CX_{0,1} \cdot U'
$$	

According to~(\ref{CXCXT}),
$T_0I_1 \cdot CX_{1,0} \cdot CX_{0,1} = 
CX_{1,0} \cdot CX_{0,1} \cdot I_0 T_1$, 
so we can absorb $T_0$ in $U'$.

Therefore,
$$ 
U = I_0T_1^{l'} \cdot CX_{1,0} \cdot CX_{0,1} \cdot U'
$$	

for some $U' \in \langle X, T \rangle$ and $l' \in \{0,...,m/d-1\}$ as needed.

\medskip
Similar argument as in the proof of Lemma~\ref{lem:CX1} shows that all the elements obtained in 
this process are indeed distinct.

First, an equality 
$U \cdot CX_{0,1} \cdot CX_{1,0} \cdot I_0 T_1^l = U' \cdot CX_{0,1} \cdot CX_{1,0} \cdot I_0 T_1^{l'}$
for some $U, U' \in \langle T,X \rangle$ and $l,l' \in \{0,\dots,m/d-1\}$, 
implies that $CX_{0,1} \cdot CX_{1,0} \cdot I_0 T_1^{l'-l} \cdot CX_{1,0} \cdot CX_{0,1} \in \langle T,X \rangle$,
implying that $l=l$ and $U=U'$.

Second, an equality 
$U \cdot CX_{0,1} \cdot CX_{1,0}\cdot I_0 T_1^l = U' \cdot CX_{1,0} \cdot CX_{0,1} \cdot T_0^{l'} I_1$
for some $U, U' \in \langle T,X \rangle$ and $l,l' \in \{0,\dots,m/d-1\}$, 
implies that 
$CX_{0,1} \cdot CX_{1,0} \cdot T_0^{-l'} T_1^{l} \cdot CX_{0,1} \cdot CX_{1,0}\in \langle T,X \rangle$,
yielding a contradiction.
\end{proof}

\begin{lemma}\label{lem:CX3}
	Let $G$ be the CNOT-Dihedral group on two qubits.
	Then any element in $G$ which has exactly three $CX$ gates and no $CS$ gates
	can be written as: 
	$$
	U = X_0^k X_1^{k'} \cdot T_0^l T_1^{l'} \cdot CX_{0,1} \cdot CX_{1,0} 
	\cdot I_0T_1^{l''} \cdot CX_{0,1}  
	$$
	where $k,k' \in \{0,1\}$, $l,l' \in \{0,\dots,m-1\}$ and $l'' \in \{0,\dots,m/d-1\}$.
	In particular, $G$ has $\frac{4m^3}{d} = \frac{m}{d} \cdot (2m)^2$ such elements.
\end{lemma}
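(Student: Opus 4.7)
The plan is to follow the peeling strategy used in the proofs of Lemmas~\ref{lem:CX1} and~\ref{lem:CX2}. I start with $U \in F(3) \setminus F(2)$ having no $CS$ gates, and apply Case (1) of Theorem~\ref{lem:main} to write $U = I_i T_j^l \cdot CX_{i,j} \cdot U'$ with $U' \in F(2)$. Since $U \notin F(2)$, the factor $U'$ cannot lie in $F(1)$, so by Lemma~\ref{lem:CX2} it can be written in one of the two canonical two-$CX$ forms. After using the $0\leftrightarrow 1$ symmetry to fix the form of $U'$, I am left with two sub-cases, indexed by $(i,j)\in\{(0,1),(1,0)\}$.

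In one sub-case the outer $CX_{i,j}$ repeats the inner adjacent $CX$. Here I plan to conjugate the intermediate $X_0^k X_1^{k'} T_0^a T_1^b$ through the outer $CX_{i,j}$ using~(\ref{CXTX}),~(\ref{CXXX}), and the identity $CX_{i,j}\cdot I_iT_j^b\cdot CX_{i,j} = T_i^b T_j^b\cdot CS_{i,j}^{-b}$ that follows from~(\ref{CSdg}) together with~(\ref{CST}). The two repeated $CX_{i,j}$'s then cancel and leave a single $CX$ together with a power of $CS_{i,j}$. Iterating~(\ref{CSCX}), in the spirit of Lemma~\ref{lem:CS1CX1}, absorbs that $CS$ factor, showing $U \in F(1)$ and contradicting $U \notin F(2)$. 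Hence this sub-case contributes nothing.

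In the remaining, ``good'' sub-case the three $CX$ gates form a SWAP-like pattern $CX_{i,j}\cdot CX_{j,i}\cdot CX_{i,j}$. The same conjugation identities, combined with the commutation of $CS$ with $T$ by~(\ref{CST}) and with $CX_{0,1}\cdot CX_{1,0}\cdot CX_{0,1}$ by the symmetry~(\ref{CSsym}), should bring $U$ to the shape
$$U = X_0^{K} X_1^{K'} \cdot T_0^{L} T_1^{L'} \cdot CS_{0,1}^{-l''} \cdot CX_{0,1}\cdot CX_{1,0}\cdot CX_{0,1}.$$
Inverting the identity
$$CX_{0,1}\cdot CX_{1,0}\cdot I_0 T_1^{l''}\cdot CX_{0,1} = T_0^{l''}T_1^{l''}\cdot CX_{0,1}\cdot CX_{1,0}\cdot CX_{0,1}\cdot CS_{0,1}^{-l''},$$
which again comes from~(\ref{CSdg}) and~(\ref{CST}), I then repackage the $CS_{0,1}^{-l''}\cdot(CX_{0,1}\cdot CX_{1,0}\cdot CX_{0,1})$ tail as $CX_{0,1}\cdot CX_{1,0}\cdot I_0 T_1^{l''}\cdot CX_{0,1}$ and absorb the leftover $T_0^{-l''}T_1^{-l''}$ into the $XT$-prefix using~(\ref{Dihedral}). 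This produces the claimed canonical form.

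Distinctness and counting go as in Lemmas~\ref{lem:CX1}--\ref{lem:CX2}: equating two canonical expressions and cancelling would force a relation of the form $CX_{0,1}\cdot CX_{1,0}\cdot I_0 T_1^{\Delta l''}\cdot CX_{1,0}\cdot CX_{0,1}\in\langle X,T\rangle$, which by the $CX$–$T$ conjugation identity equals $T_0^{\Delta l''}T_1^{\Delta l''}\cdot CS_{0,1}^{-\Delta l''}$, forcing $\Delta l''\equiv 0\pmod{m/d}$; the remaining $XT$-prefix equation then pins down the other parameters modulo $2$ and modulo $m$. The resulting count is $2\cdot 2\cdot m\cdot m\cdot (m/d) = 4m^3/d$. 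The step I expect to be the main obstacle is the final rewriting in the good sub-case: verifying that the $CS^{-l''}$ produced by conjugation through the outer $CX$ combines with $CX_{0,1}\cdot CX_{1,0}\cdot CX_{0,1}$ to yield \emph{exactly} the single interior phase $I_0 T_1^{l''}$ of the canonical form, and that the two good sub-cases (before the $0\leftrightarrow 1$ reduction) really collapse onto this same normalized representative rather than producing two parallel families as in Lemma~\ref{lem:CX2}.
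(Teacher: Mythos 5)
Your proposal is correct and follows the same overall architecture as the paper's proof: peel one $CX$ via Case (1) of Theorem~\ref{lem:main}, put the remainder in the canonical form of Lemma~\ref{lem:CX2}, split on whether the peeled $CX$ repeats or alternates with the adjacent inner one, discard the repeating case because it drops to $F(1)$, normalize the alternating case, and argue distinctness as in Lemma~\ref{lem:CX2}. Where you diverge is in the normalization mechanics. In the good sub-case you conjugate the intermediate dressing through a single $CX$, which generates a $CS$ power that you must then repackage into the interior phase via the inverse of $CX_{0,1}CX_{1,0}\cdot I_0T_1^{l''}\cdot CX_{0,1}=T_0^{l''}T_1^{l''}\cdot CX_{0,1}CX_{1,0}CX_{0,1}\cdot CS_{0,1}^{-l''}$; the paper instead slides the phase through the double-$CX$ block directly with identity~(\ref{CXCXT}), so no $CS$ ever appears. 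Your route is sound, and the obstacle you flag at the end dissolves: $CX_{0,1}CX_{1,0}CX_{0,1}=CX_{1,0}CX_{0,1}CX_{1,0}=\mathrm{SWAP}$, which commutes with $CS_{0,1}$ by~(\ref{CSsym}) and with symmetric single-qubit dressings, so both alternating sub-cases collapse onto the single family $D\cdot CS_{0,1}^{e}\cdot\mathrm{SWAP}$ rather than two parallel ones — this is exactly why Lemma~\ref{lem:CX3} has one family where Lemma~\ref{lem:CX2} has two. Likewise your degenerate-case treatment (cancel the repeated $CX$'s, then strip the residual $CS$ power by iterating~(\ref{CSCX}) as in Lemma~\ref{lem:CS1CX1}) is correct but heavier than needed: with the normal form inherited from Lemma~\ref{lem:CX2} the intermediate phase sits on the control of the repeated $CX$, so by~(\ref{CXTX}) the two gates cancel outright and no $CS$ arises there either.
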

\begin{proof} 
According to Case (1) of Theorem~\ref{lem:main} and Lemma~\ref{lem:CX2} 
we may assume w.l.o.g. that such an element $U$ can be written as 
$$
U =  I_i T_j^l \cdot CX_{i,j} \cdot I_0T_1^{l'} \cdot CX_{1,0} \cdot CX_{0,1} \cdot U' 
$$	

where $U' \in \langle T,X \rangle$, $i,j \in \{0,1\}$, 
$l,l' \in \{0,...,m/d-1\}$.

Hence, there are two options, either $(i,j)=(0,1)$ or $(1,0)$.

1) First, assume that $(i,j)=(1,0)$, then 
$$
U = T_0^l I_1 \cdot CX_{1,0} \cdot I_0T_1^{l'} \cdot CX_{1,0} \cdot CX_{0,1} \cdot U' 
$$

By~(\ref{CXTX}),  $T_1$ commutes with $CX_{1,0}$, so we can write $U$ as

\begin{equation*}
\begin{split}
U &= T_0^l I_1 \cdot I_0T_1^{l'} \cdot CX_{1,0} \cdot CX_{1,0} \cdot CX_{0,1} \cdot U' \\
&= T_0^l T_1^{l'} \cdot CX_{0,1} \cdot U'
\end{split}
\end{equation*}

Then we actually have only one $CX$ gate and we are done by Lemma~\ref{lem:CX1}.

\medskip
 
2) Now assume that $(i,j)=(0,1)$, then we can write $U$ as
$$
U =  I_0 T_1^l \cdot CX_{0,1} \cdot I_0T_1^{l'} \cdot CX_{1,0} \cdot CX_{0,1} \cdot U' 
$$

for some $U',U'',l,l'$.

By~(\ref{CXTX}), $T_1$ commutes with $CX_{1,0}$, so we can rewrite $U$ as
$$
U = I_0 T_1^l \cdot CX_{0,1} \cdot CX_{1,0} \cdot I_0T_1^{l'} \cdot CX_{0,1} \cdot U' 
$$	

According to~(\ref{CXCXT}),
$I_0T_1 \cdot CX_{0,1} \cdot CX_{1,0} = 
CX_{0,1} \cdot CX_{1,0} \cdot T_0 I_1$, therefore,
$$ 
U = CX_{0,1} \cdot CX_{1,0} \cdot T_0^{l} T_1^{l'} \cdot CX_{0,1} \cdot U'
$$	

for some $l,l' \in \{0,...,m/d-1\}$.

Now, by~(\ref{CXTX}), $T_0$ commutes with $CX_{0,1}$ and so we can absorb $T_0$ in $U'$,
thus
\begin{equation*}
\begin{split}
U &= CX_{0,1} \cdot CX_{1,0} \cdot I_0 T_1^{l'} \cdot CX_{0,1} \cdot U' \\
&= CX_{0,1} \cdot I_0 T_1^{l'} \cdot CX_{1,0} \cdot CX_{0,1} \cdot U'
\end{split}
\end{equation*}

by using~(\ref{CXTX}) again. 

\medskip

The same argument as in the proof of Lemma~\ref{lem:CX2} shows that all the elements obtained in 
this process are indeed distinct.
\end{proof}


\begin{proof}[Proof of Theorem~\ref{thm:CSCX}]
According to Corollary 1 in~\cite{NonCliff}, the CNOT-Dihedral group $G=G(m)$ 
on two qubits has exactly $24 \cdot m^3 /d$ elements. 

By Lemma~\ref{lem:CS1CX1}, there are no elements with both $CX$ and $CS$ gates.
The cases where there are only $CS$ gates were handled in Theorem~\ref{thm:CS}.
The remaining cases where there are only $CX$ gates were proved in 
Lemmas~\ref{lem:CX1}, ~\ref{lem:CX2} and~\ref{lem:CX3}.
\end{proof}


\bibliographystyle{plainnat}
\bibliography{bibliography}

\end{document}